\newtheorem{proposition}{\textbf{Proposition}}
\newtheorem{proof}{Proof}
\begin{document}

\title{Optimal Anchor Deployment and Topology Design for Large-Scale AUV Navigation}

\author{Wei Huang, \IEEEmembership{Member, IEEE}, Junpeng Lu, Tianhe Xu, Jianxu Shu, Hao Zhang, \IEEEmembership{Senior Member, IEEE}, and Kaitao Meng, \IEEEmembership{Member, IEEE}, Yanan Wu
\thanks{Submitted XX 2025. This work was financially supported in part by National Key Research and Development Program of China (2024YFB3909701), in part by National Natural Science Foundation of China (42404001,62271459), in part by the Stable Supporting Fund of Acoustic Science and Technology Laboratory (JCKYS2025SSJS008), in part by Shandong Provincial Natural Science Foundation (ZR2023QF128).}
\thanks{Wei Huang, Junpeng Lu, Hao Zhang and Yanan Wu are with the Faculty of Information Science and Engineering, Ocean University of China, Qingdao 266100, China (e-mail:hw@ouc.edu.cn,lujunpeng@stu.ouc.edu.cn,zhanghao@ouc.edu.cn,wuya-nan@ouc.edu.cn).}
\thanks{Tianhe Xu and Jianxu Shu are with the School of Space Science and Technology, Shandong University (Weihai), Weihai 264209, China (e-mail: thxu@sdu.edu.cn, shujx@mail.sdu.edu.cn).}
\thanks{Kaitao Meng is with the Department of Electrical and Electronic Engineering, The University of Manchester, Manchester, UK (emails: kaitao.meng@manchester.ac.uk).}
\thanks{Corresponding author: Kaitao Meng.}}

\markboth{IEEE, Vol. XX, No. XX, XX 2025}
{Huang \MakeLowercase{\textit{et al.}}: Optimal Deployment of Anchor Cluster for Large-scale AUV Navigation}
\maketitle

\begin{abstract}
Seafloor acoustic anchors are an important component of AUV navigation, providing absolute updates that correct inertial dead-reckoning. Unlike terrestrial positioning systems, the deployment of underwater anchor nodes is usually sparse due to the uneven distribution of underwater users, as well as the high economic cost and difficult maintenance of underwater equipment. These anchor nodes lack satellite coverage and cannot form ubiquitous backhaul as terrestrial nodes do. In this paper, we investigate the optimal anchor deployment topology to provide high-quality AUV navigation and positioning services. We first analyze the possible deployment mode in large-scale underwater navigation system, and formulate a topology optimization for underwater anchor node deployment. Then, we derive a scaling law about the influence of anchors in each cluster on the navigation performance within a given area and demonstrate a service area coverage condition with a high probability of reaching the destination. Finally, the optimization performance is evaluated through experimental results.
\end{abstract}

\begin{IEEEkeywords}
underwater acoustic localization, anchor node, optimal node deployment, underwater vehicle navigation.
\end{IEEEkeywords}

\IEEEpeerreviewmaketitle

\section{Introduction}
\IEEEPARstart{U}{nderwater} vehicles have played an important role in many fields such as underwater data collection, underwater rescue, and exploration of marine resources. Due to severe attenuation of electromagnetic waves underwater, underwater vehicles cannot be located through global navigation satellite systems (GNSS) and usually rely on dead-reckoning for real-time position estimation \cite{Erol2011LocSurvey}. However, without external information to assist in position correction, the dead-reckoning errors will accumulate continuously as time goes by \cite{Wang2020Navigation}. Although instruments such as Gravimeters and Doppler velocity logs (DVLs) have achieved significant results in assisting underwater navigation and positioning in recent years, vehicles still rely on the absolute position correction of underwater long baseline acoustic positioning systems \cite{Luo2021LocSurvey}.

\indent Unlike satellite systems, the communication coverage of anchor node is quite limited compared with GNSS. At the same time, due to the fact that anchor nodes are economically expensive and difficult to maintain, the anchor node cannot be deployed ubiquitously, unlike terrestrial systems \cite{Hu2024SparseeNode}. Thus, how to efficiently utilize a limited number of underwater anchors for providing good navigation and positioning services has attracted more and more attention.

\indent For a given target to be located, the optimal deployment of reference anchor nodes is an important way to improve the localization accuracy performance, as it decides the node geometric dilution of precision (GDOP) \cite{Huang2024Deployment,Meng2025TWC}. In this perspective, many works have been done in optimal reference nodes selection to improve the performance of target positioning. Zhao et al. proposed a D-optimality criterion, which maximize the determinant of Fisher information matrix (FIM), so as to minimize the Cram$\acute{e}$r lower bound (CRLB) \cite{Zhao2013OptSensor}. In \cite{Rui2014OptReceiver}, the authors proposed an optimal node deployment strategy according to an A-optimality criterion that minimizing the trace of inverse FIM for time-of-arrival (TOA) localization. In \cite{Wang2021APOR} and \cite{Xue2022JOG}, the authors adopted ships to provide virtual references along circular paths to locate an underwater anchor, but they did not analyze how to determine the optimal horizontal distance between the reference position and the target. These works are suitable for circumstances where there are sufficient reference nodes within the communication coverage of the target to be located. However, as mentioned above, there may be only a few numbers of anchor nodes due to economic and maintenance reasons, thus the freedom of selecting reference nodes is limited.
\vspace{-5mm}
\begin{figure}[!htbp]
	\centering
	\subfloat[]{\includegraphics[width=0.23\linewidth]{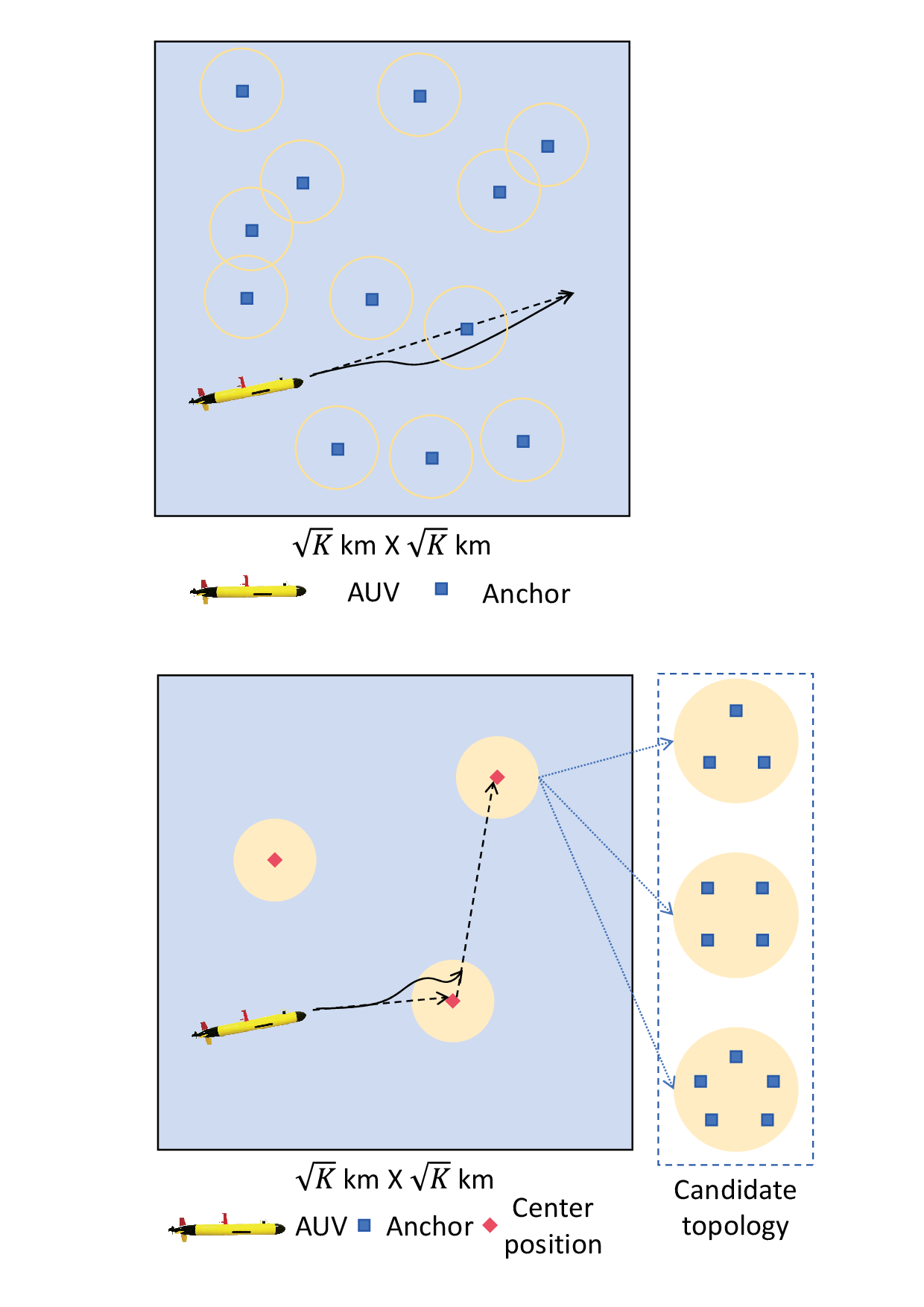}\label{fig01a}}
	\subfloat[]{\includegraphics[width=0.22\linewidth]{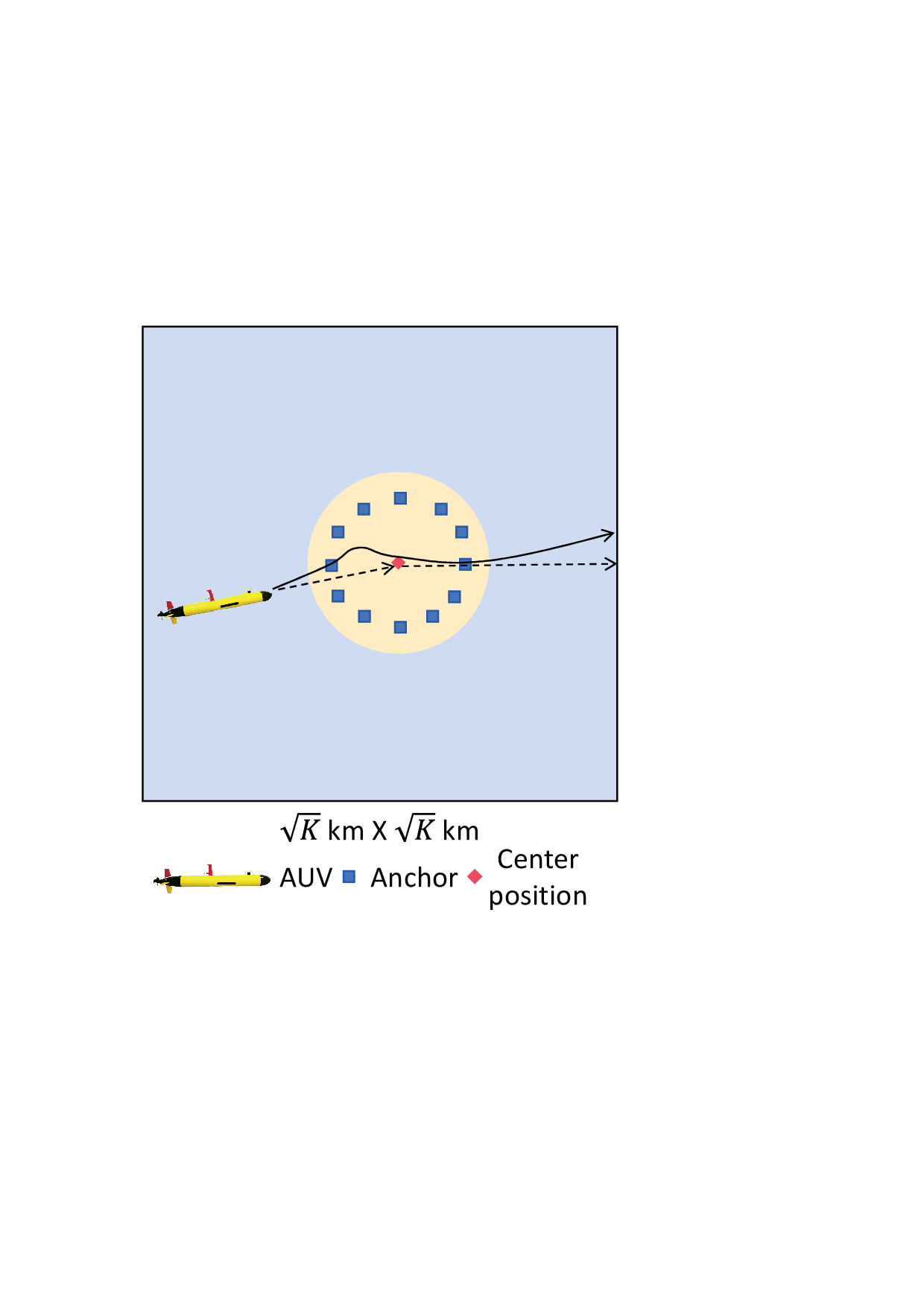}\label{fig01b}}
	\subfloat[]{\includegraphics[width=0.3\linewidth]{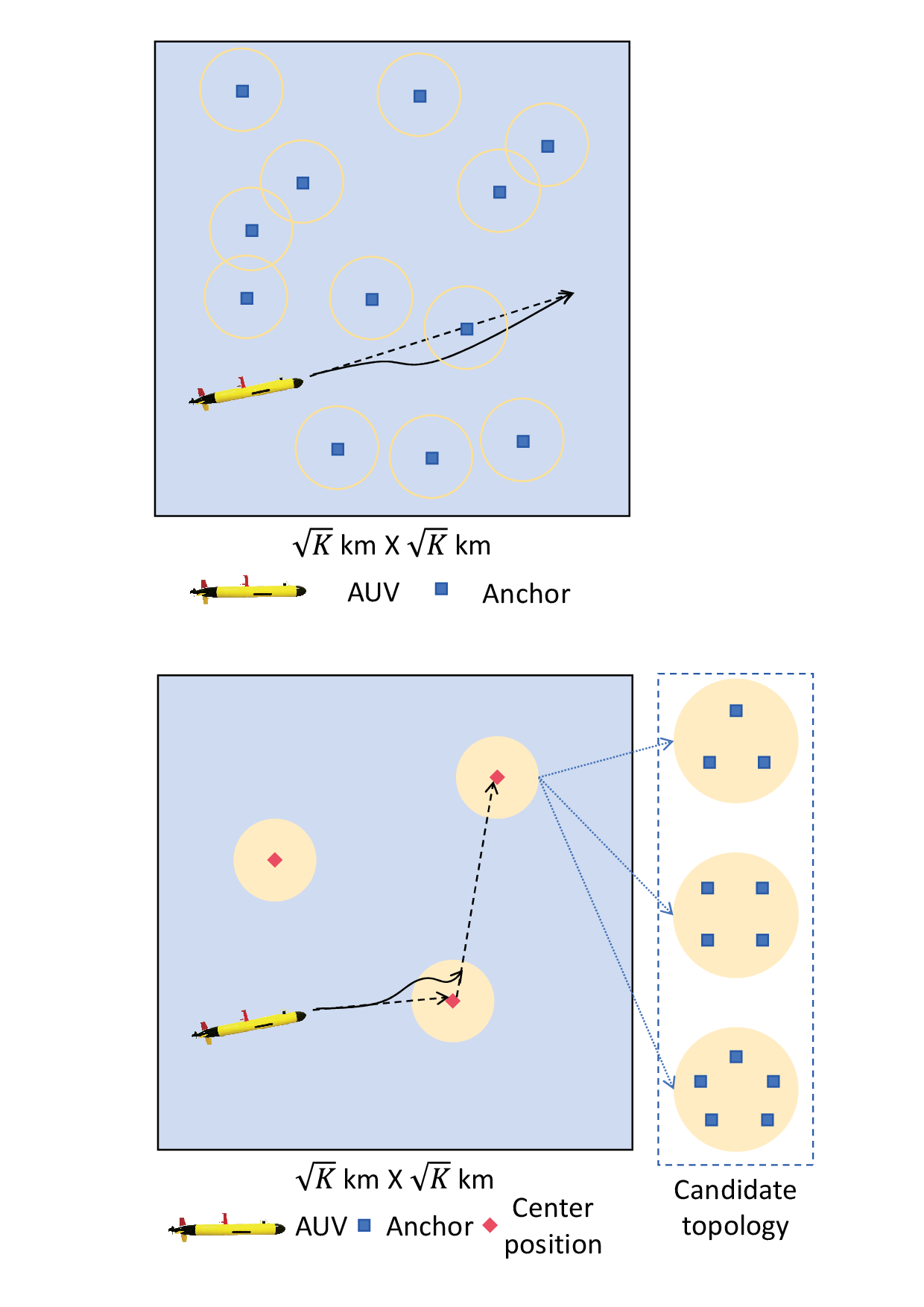}\label{fig01c}}
	\caption{Deployment modes of anchors. (a)Random sparse distribution of anchor nodes. (b)Centralized deployment of Anchor Nodes. (c)Distributed deployment of Anchor Node Sets. }
	\label{fig01}
\end{figure}
\vspace{-5mm}

\indent Considering the AUV navigation in a large-scale region with a given number of anchor nodes as shown in Fig.~\ref{fig01}, if they are randomly distributed in Fig.~\ref{fig01a}, AUV will not be able to simultaneously receive acoustic signal observation information that meets the positioning requirements. Therefore, a more reasonable approach is to cluster a small number of anchors into a group for centralized deployment, while the issue that should be considered is how to deploy the anchor cluster as shown in Fig.~\ref{fig01b} and Fig.~\ref{fig01c}. In this paper, we study the optimal anchor deployment topology for large-scale AUV navigation, aiming to minimize the expected error of AUV's position estimation in the whole area. The contributions of this paper are summarized as follows.

\begin{itemize}
\item Based on practical requirements for deploying sparse nodes in large-scale ocean navigation systems, we simultaneously study the positioning accuracy of anchor cluster coverage and non coverage areas, and establish an optimized deployment model for anchor clusters.
\item To improve the performance of the positioning system, we analyze the seafloor anchor node positioning scene considering the Snell effect, and optimize the deployment of reference nodes by minimizing the CRLB.
\item We discovered the service coverage conditions that ensure stable navigation of underwater vehicles and derived a scaling law about the influence of anchors in each cluster on the localization performance within a given area.
\end{itemize}

The rest of this paper is organized as follows. Section \ref{sec2} describes the system model. The problem formulation and optimization solution is given in section \ref{sec3}. Numerical results and discussions are given in section \ref{sec4}.

\vspace{-3mm}
\section{System Model}\label{sec2}
\vspace{-5mm}
\begin{figure}[!htbp]
	\centering
	\begin{minipage}[h]{0.23\textwidth}
		\centering
		\includegraphics[width=\linewidth]{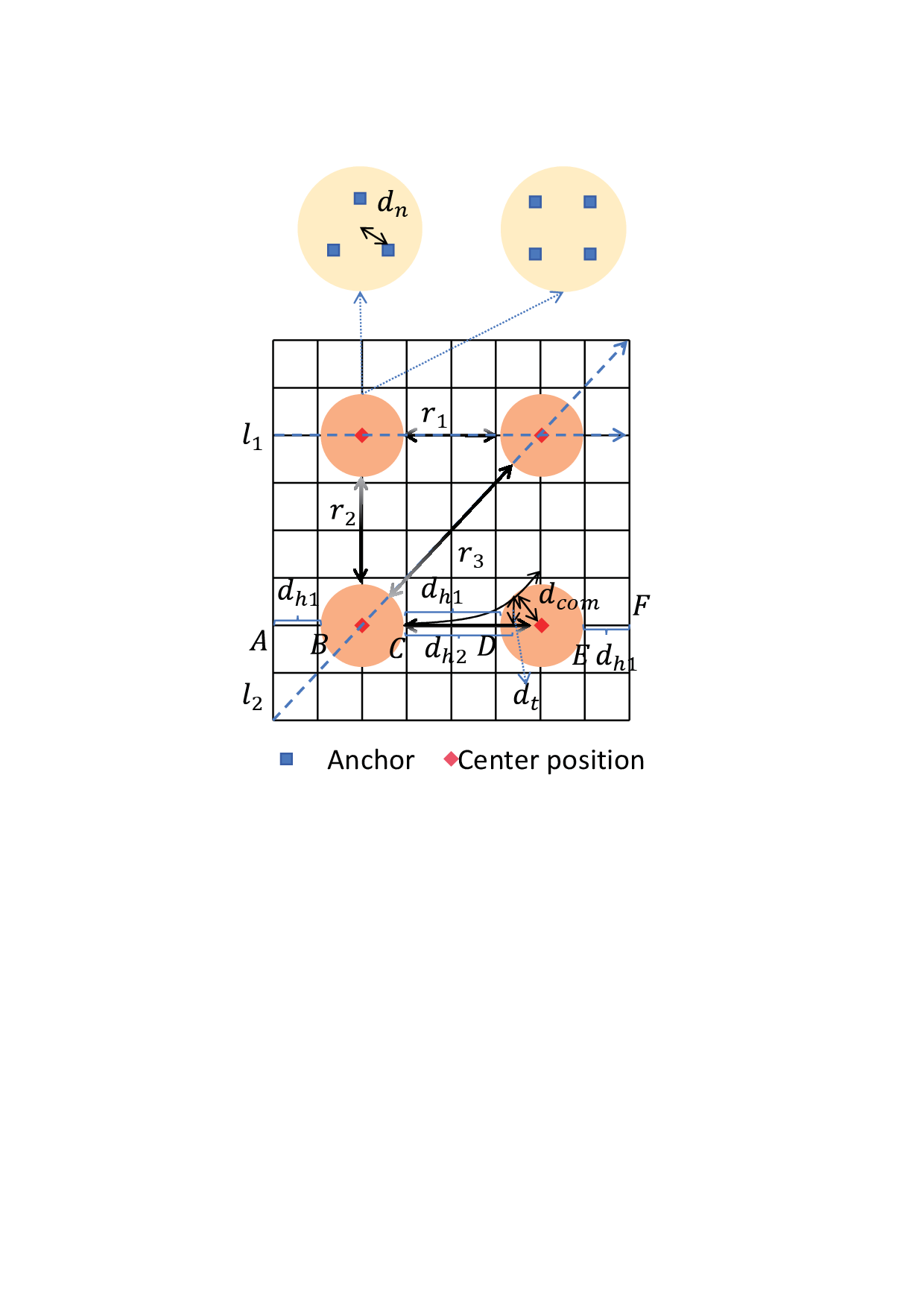}%
		\caption{Uniform deployment for fair service in a $\sqrt{k} \times \sqrt{k}$ area.}
		\label{fig02}
	\end{minipage}
	\begin{minipage}[h]{0.24\textwidth}
		\centering
		\includegraphics[width=\linewidth]{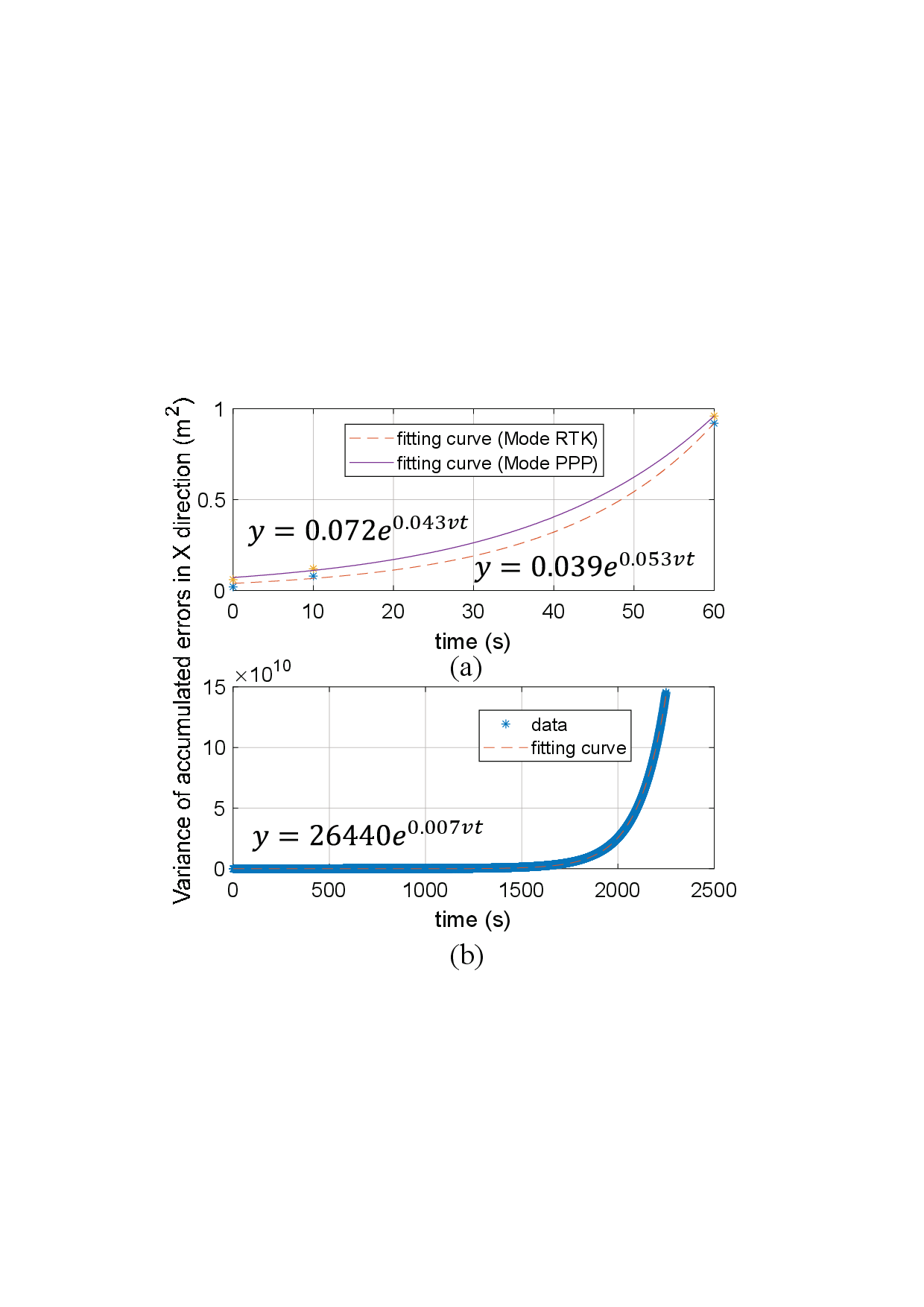}%
		\caption{The variance of X-direction position error over time in exponential fitting. (a) Product factory testing results. (b) Real-world experimental results. (Data will be open at www.weiwilliamhuang.cn)}
		\label{fig03}
	\end{minipage}
\end{figure}

\indent Considering an AUV sailing across an area of $\sqrt{k}$ km $\times$ $\sqrt{k}$ km, the navigation process of the AUV is divided into two stages, one is the autonomous navigation stage outside the coverage range of the anchor cluster, and the other is the acoustic navigation stage within the anchor cluster as shown in Fig.~\ref{fig02}. In this paper, we aim to minimize the expected positioning error of AUVs passing through the area by optimizing the deployment of underwater anchor nodes.

\subsection{Navigation Model}

When there is no acoustic assisted positioning, AUVs rely on inertial navigation to obtain real-time position. To reduce the accumulation of positioning errors, AUV usually passes through the coverage area of the anchor clusters as much as possible during navigation. However, between two clusters of anchors, the position error of AUV accumulates with the time, and few studies have analyzed the distribution characteristics of inertial navigation errors from a statistical perspective. To guide subsequent research on the impact of inertial navigation errors on node deployment strategies, we give an exponential functional model to describe the variance of accumulated errors. Fig.~\ref{fig03} shows the error accumulation of a measured inertial navigation system in horizontal direction with IMU-ISA-100C produced by NOVATEL company \cite{IMU2025}.

Let the moment when the AUV leaves a certain anchor cluster be the starting time, and after time $t$, the position of the inertial navigation system will be $\hat{\boldsymbol{p}}_t = [\hat{x}_t,\hat{y}_t,\hat{z}_t]^T$. According to the kinematic equation, there is:
\begin{equation}
\hat{\boldsymbol{p}}_t = \boldsymbol{p}_0 + \int_{0}^{t} \boldsymbol{v}(t) dt, \label{eq1}
\end{equation}
where $\boldsymbol{p}_0 = [x_0,y_0,z_0]^T$ is the start position leaving the first anchor cluster, $t$ is the sailing time, $\boldsymbol{v}=[v_x, v_y, v_z]^T$ is the speed vector, $\Delta \boldsymbol{p} = \int_{0}^{t} \boldsymbol{v}(t) dt$ is the displacement.

The motion of AUV is nonlinear and difficult to accurately describe its trajectory. To simplify the problem, we discretize the time line. When the time slot $T$ is small enough, it can be considered that the AUV moves linearly with uniform speed or acceleration during the time slot. Thus, for time moment t, the position of the inertial navigation system can be described as $\hat{\boldsymbol{p}}_t = \boldsymbol{p}_0 + \sum_{t=1}^{t} \boldsymbol{\bar{v}}_tT$.

Let the real position of AUV be $\boldsymbol{p}_t = [x_t,y_t,z_t]^T$, we have $\boldsymbol{p}_t = \hat{\boldsymbol{p}}_t + \boldsymbol{e}_{nv}(t)$, where $\boldsymbol{e}_{nv}(t) = [e_{t,x},e_{t,y},e_{t,z}]^T$ denotes the accumulated navigation error, and $e_{t,x},e_{t,y},e_{t,z} \sim N(0,\sigma_t^2)$. Especially, when the AUV just leaves an anchor cluster, the real position will be $\boldsymbol{p}_0 = \hat{\boldsymbol{p}}_0 + \boldsymbol{e}_n(t=0)$, where $\boldsymbol{e}_n(t=0) = [e_{0,x},e_{0,y},e_{0,z}]^T$ is the initial random navigation error, the elements of which all follow Gaussian distribution with $e_{0,x},e_{0,y},e_{0,z} \sim N(0,\sigma_0^2)$.

From the experimental results shown in Fig.~\ref{fig03}, the variance of location error brought by navigation follows exponential divergence:
\begin{equation}
\boldsymbol{\sigma}_t^2 = \boldsymbol{\sigma}_0^2 + \beta_1\mathrm{e}^{\beta_2\Delta \boldsymbol{p}_t}, \label{eq2}
\end{equation}
where $\beta_1$ and $\beta_2$ are diffusion coefficients, and $\mathrm{e}$ is the natural base number. 

Based on this exponential diffusion model, the position error still follows a Gaussian distribution with zero mean value, so the position $\hat{\boldsymbol{p}}_t$ estimated by the navigation system  is an unbiased estimation of the real location $\boldsymbol{p}_t$ because $E[\hat{\boldsymbol{p}}_t] = \boldsymbol{p}_t$. Then, the covariance matrix of the position error during pure navigation area outside the anchor cluster is
\begin{equation}
\boldsymbol{C}_{nv} = \mathbb{E}[(\hat{\boldsymbol{p}}_t - \mathbb{E}[\hat{\boldsymbol{p}}_t])(\hat{\boldsymbol{p}}_t - \mathbb{E}[\hat{\boldsymbol{p}}_t])^T] = \mathbb{E}[e_{nv}^2(t)],  \label{eq3}
\end{equation}
which can be expressed as $\boldsymbol{C}_{nv} = diag[\sigma^2_{t,x},\sigma^2_{t,y},\sigma^2_{t,z}]$, and $\sigma_{t,x},\sigma_{t,y},\sigma_{t,z}$ are the components of $\sigma_{t}$ in $x,y,z$ directions, respectively. We quantify navigation accuracy by:
\begin{equation}
\begin{split}
	e_{nv} &= \mathbb{E}\left[tr(\boldsymbol{C}_{nv})\right]=\frac{1}{N}\sum_{n=1}^{N}(\sigma_{nT,x}^2+\sigma_{nT,y}^2+\sigma_{nT,z}^2).  
\end{split}\label{eq4}
\end{equation}
where $N$ is the number of sampling points for positioning.

\vspace{-5mm}
\subsection{Acoustic Positioning Model}
\vspace{-5mm}
\begin{figure}[!htbp]
	\centering
	\includegraphics[width=0.8\linewidth]{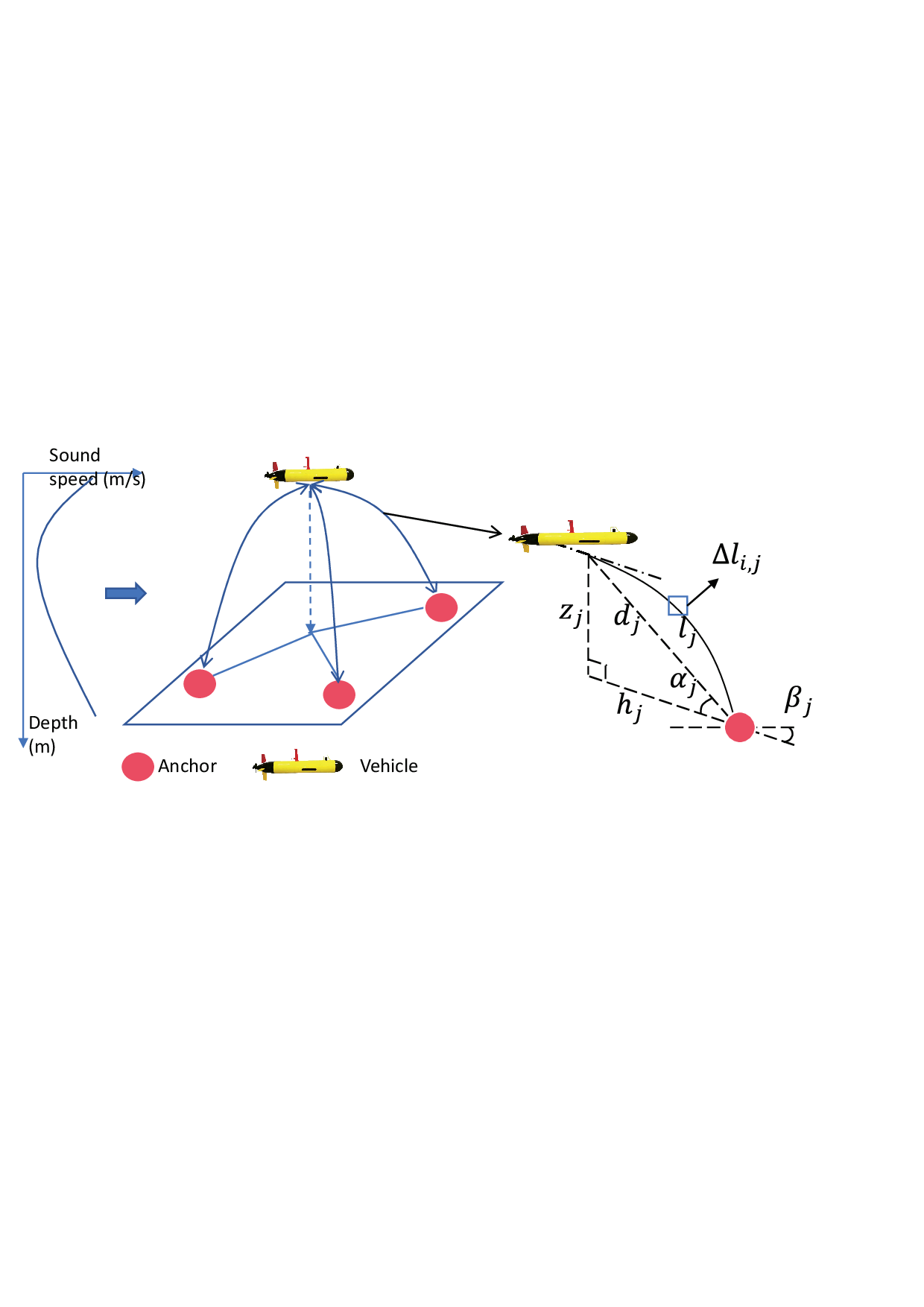}%
	\caption{Projection mechanism with signal refractive effect. Between target and anchor $j$, the elevation angle is $\alpha_j$, the azimuth angle is $\beta_j$, the vertical distance is $z_j$, the horizontal signal propagation distance is $h_j$, the real signal propagation path is $l_j$, and $\Delta l_{i,j}$ represents the signal propagation path at the $i$th depth layer.}
	\label{fig04}
\end{figure}
\vspace{-4mm}

Different from navigation with gradually accumulating errors, underwater acoustic positioning achieves instantaneous high-precision positioning through multi-point ranging, which can correct navigation position errors. Since the anchors are all deployed at the water bottom, at least three anchors are required for 3-D trilateration as shown in Fig.~\ref{fig04}. However, due to the non-uniformly distributed sound speed, there are signal refraction phenomena governed by Snell’s law. The equivalent line-of-sight (LOS) propagation distance needs to be corrected through ray tracing theory.

\indent Let the sound speed profile be $\boldsymbol{s} = [s_1,s_2,...,s_i],i=1,2,...,I$, the elevation angle be $\alpha_j$, the azimuth angle be $\beta_j$, the vertical distance be $z_j$, the horizontal signal propagation distance be $h_j$, the LOS distance be $d_j$, the real signal propagation path is $l_j$, and $\Delta l_{i,j}$ represents the signal propagation path at the $i$th depth layer, according to \cite{Huang2024Opt}, when the measurement error of propagation time is proportional to the real signal propagation distance and follows a Gaussian distribution, that is $n_{l,i,j}\sim N(0,\sigma_{l,i,j}^2)$ and $\sigma_{l,i,j} = \gamma \Delta l_{i,j}$ with $\gamma$ being a scale factor, the LOS measurement error between reference anchor j and the target will also follows a Gaussian distribution with $n_{d,j}\sim N\left(0,\sigma^2_{d,j}\right)$ where $\sigma^2_{d,j} \approx \sum_{i=2}^I \frac{\gamma^2 s_{i-1}^2}{s_1^2 - s_{i-1}^2\cos^2\alpha_{j}}$.

\indent Without loss of generality, assume there are J reference nodes that are within the communication range of the target, the anchor node measurement covariance matrix is given by
\begin{equation}
	\boldsymbol{C}_p = diag[\sigma^2_{d,1},\sigma^2_{d,2},...,\sigma^2_{d,J}]. \label{eq5}
\end{equation}
\indent The Jacobian matrix of the J measurements can be
\begin{equation}
	\boldsymbol{\mathcal{J}}_0 = \begin{bmatrix}
		\boldsymbol{a},\boldsymbol{b},\boldsymbol{c}
	\end{bmatrix},\label{eq6}
\end{equation}
where $\boldsymbol{a} = \left[\cos\alpha_1 \cos\beta_1,\cos\alpha_2 \cos\beta_2,...,\cos\alpha_J \cos\beta_J\right]^T$, $\boldsymbol{b} = \left[\cos\alpha_1 \sin\beta_1,\cos\alpha_2 \sin\beta_2,...,\cos\alpha_J \sin\beta_J\right]^T$, and $\boldsymbol{c} = \left[\sin\alpha_1,\sin\alpha_2,...,\sin\alpha_J\right]^T$. Then, the FIM will be:

\begin{equation}
	\begin{split}
		\boldsymbol{\Phi} &= \left[\boldsymbol{a} \quad \boldsymbol{b} \quad \boldsymbol{c}\right]^T \boldsymbol{C}_p^{-1}\left[\boldsymbol{a} \quad \boldsymbol{b} \quad \boldsymbol{c}\right].
	\end{split}\label{eq8}
\end{equation}

According to the A-optimization criterion that minimizing the trace of the inverse FIM in \cite{Xu2017Opt}, the CRLB will be
\begin{equation}
	\rm{CRLB} = \boldsymbol{\Phi}^{-1},\label{eq9}
\end{equation}
which reach the minimum value when the elevation angles $\alpha_1 = \alpha_2 =...=\alpha_J = \alpha $.

Our goal is to minimize the expected positioning error of underwater vehicles, so the positioning performance evaluation error is the expected value of CRLB:
\begin{equation}
	e_{p} = \mathbb{E}[\boldsymbol{\Phi}^{-1}]. \label{eq10}
\end{equation}

\subsection{Hybrid Model of Acoustic Positioning and Navigation}


\indent We aim to minimize the positioning error of the AUV during navigation throughout the entire region, for area covered by anchors, more anchors in one cluster may be helpful to improve positioning accuracy, which will reduce the number of anchor clusters. While for area uncovered by anchors, the cluster spacing is expected to be short, which requires reducing the number of nodes within the cluster. Therefore, there is a contradiction in the impact of cluster node settings on the localization performance of the two situations. 

Considering both stages simultaneously, we establish a hybrid model of acoustic positioning and navigation:
\begin{equation}
\begin{split}
	\rm{Opt} &=  \min ((1-\lambda_1) \mathbb{E}[tr(\boldsymbol{C}_t)] \\
	&+ \lambda_1 (\lambda_2 \mathbb{E}[\rm{CRLB}] +(1-\lambda_2)\mathbb{E}[tr(\boldsymbol{C}_t)]),
\end{split}\label{eq11}
\end{equation}
where $\lambda_1$ , $\lambda_2$ weight the uncovered/covered portions and the navigation/positioning trade-off within coverage, respectively \footnote{The underwater vehicle is assumed to travel straightly between adjacent anchor clusters.}. In the anchor coverage area, if the expected positioning error is much lower than the expected navigation error, the optimization problem can be simplified as
\begin{equation}
\rm{Opt} =  \min (\lambda_1 \mathbb{E}[\rm{CRLB}] + (1-\lambda_1) \mathbb{E}[tr(\boldsymbol{C}_t)]), \label{eq12}
\end{equation}
This situation is prone to occur because without the assistance of other sensors, navigation errors will rapidly increase, and with the help of acoustic positioning, the navigation error can be reset.

\vspace{-3mm}
\section{Optimal Anchor Cluster Deployment}\label{sec3}
\subsection{Anchor Deployment Analysis}
\subsubsection{Problem Formulation}
For a target area as shown in Fig.~\ref{fig02}, assume the total number of anchors to be deployed is $N_{ta}$, and the number of anchors in each cluster is $N_{ca}$, so the number of anchor clusters will be $N_c  = \lfloor N_{ta}/N_{ca}\rfloor$, where $\lfloor \cdot \rfloor$ means rounding down. In order to provide fair services, the anchor clusters are evenly distributed, so the number of anchor clusters in each direction is $\lfloor \sqrt{N_c} \rfloor$. Therefore, the distance between adjacent clusters is $d_c = \frac{\sqrt{k}}{\lfloor \sqrt{N_c} \rfloor}$.

For a given number of anchors in each cluster $N_{ca}$, the average positioning error of all positions covered by anchors can be solved through traversal method, and there will be:
\begin{equation}
	\mathbb{E}[\rm{CRLB}] = Q(N_{ca}), \label{eq13}
\end{equation}
which means the average positioning error will be a function of $N_{ca}$ because the topology shape and coverage range of anchors are different as $N_{ca}$ changes.

In real applications, AUVs are commonly equipped with pressure sensors, which can maintain stable depth information, so navigation positioning errors can be simplified as $tr(\boldsymbol{C}_t) =  (\sigma^2_{t,x} + \sigma^2_{t,y})$. Then, the optimization problem can be derived as:
\begin{equation}
	\begin{split}
	\rm{Opt} &=  \min (\lambda_1 Q_{N_{ca}} + (1-\lambda_1)\frac{1}{N}\sum_{n=1}^N (\sigma_{0,x}^2 +\sigma_{0,y}^2 \\
	&+ \beta_1\mathrm{e}^{\beta_2\Delta p_{nT,x}} + \beta_1\mathrm{e}^{\beta_2\Delta p_{nT,y}})), \label{eq14}
	\end{split}
\end{equation}
where $\Delta p_{nT,x}$ and $\Delta p_{nT,y}$ form three combinations that $\Delta p_{nT,x} = d_h, \Delta p_{nT,y} = 0$, $\Delta p_{nT,x} = 0, \Delta p_{nT,y} =  d_h$ and $\Delta p_{nT,x} = d_h, \Delta p_{nT,y} =  d_h$. $d_h \approx d_c - 2d_{com}$, where $d_{com}$ is approximately the localization coverage of a cluster. These three combinations are corresponding to path $r_1$, $r_2$, and $r_3$ in Fig.~\ref{fig02}, respectively. 

\subsubsection{Condition of Stable Navigation}
To make sure that the AUV can successfully arrive at the neighbor anchor cluster, the distance between adjacent anchor clusters $d_c$ can not be too large, so that with a given total number of anchors, the maximum coverage area $k$ should be less than a threshold value $k_{max}$. Concretely, according to Fig.~\ref{fig02}, the minimum navigation distance between anchor clusters is $d_{h1}$. $d_{h2}$ and $d_{t}$ are two distances that describe the AUV deviation from anchor cluster, which means when the AUV travels a distance of $d_{h2}$, navigation errors prevent it from reaching the next anchor cluster. Based on the above distance definition we can summarize that
\setcounter{proposition}{0} 
\begin{proposition}
	If the AUV can successfully sail to neighbor anchors in X or Y direction, there must be
	\begin{equation}
		\begin{aligned}
			d_{com}^2 - \frac{d_{com}^4}{(d_{com}+d_{h1})^2} - (\beta_1+1) e^{\beta_2 d_{h1}} -2\sigma_0^2 > 0.
		\end{aligned}\label{eq15}
	\end{equation}
\end{proposition}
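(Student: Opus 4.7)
The plan is to read (15) as a necessary geometric-plus-statistical condition: the AUV, starting from the boundary of one anchor cluster and drifting toward a neighbor cluster along the X or Y axis, must land inside the neighbor's coverage disk. I would split the argument into a geometric step that upper-bounds the admissible lateral deviation at arrival, and a statistical step that lower-bounds the actual accumulated deviation using the exponential variance model in (2).

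For the geometric side, I would set the exit point from cluster~1 as the origin of a local frame so that the center of cluster~2 lies at longitudinal distance $d_{com}+d_{h1}$. Since cluster~2 is a disk of radius $d_{com}$, the set of headings from the exit point that actually hit cluster~2 forms a cone, and its intersection with the plane perpendicular to the nominal heading at the cluster~2 center is exactly the chord-of-contact of cluster~2 from that exit point. A standard similar-triangles calculation gives perpendicular distance $d_{com}^2/(d_{com}+d_{h1})$ from the cluster~2 center to this chord, so the squared half-width of the admissible landing window is $d_{com}^2-d_{com}^4/(d_{com}+d_{h1})^2$. This is exactly the first two (positive) terms on the left of (15).

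For the statistical side, I would invoke (2) evaluated at $\Delta p_t=d_{h1}$, which gives per-axis variance $\sigma_0^2+\beta_1 e^{\beta_2 d_{h1}}$ at the moment the AUV nominally reaches the far edge. Using the pressure-sensor simplification $tr(\boldsymbol{C}_t)=\sigma_{t,x}^2+\sigma_{t,y}^2$ together with the unit nominal-displacement term that enters the landing-window comparison on the same footing as the lateral drift, the total squared spread that must fit inside the geometric window is $2\sigma_0^2+(\beta_1+1)e^{\beta_2 d_{h1}}$. Requiring the geometric budget to strictly dominate this spread gives exactly (15) after rearrangement.

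The main obstacle is getting the exact accounting of error contributions right, in particular why the exponential coefficient reads $\beta_1+1$ rather than, say, $2\beta_1$. This requires a careful bookkeeping of (i) the residual fix uncertainty $\sigma_0^2$ inherited from cluster~1, applied twice for the two horizontal axes; (ii) the accumulated inertial variance $\beta_1 e^{\beta_2 d_{h1}}$ that loads the lateral miss at the landing window; and (iii) a longitudinal displacement error that, through the curvature of the cluster~2 boundary, contributes an additional $e^{\beta_2 d_{h1}}$-scaled term to the effective squared lateral loss. Once these three contributions are correctly identified, the proof reduces to collecting terms and recognizing the geometric budget from the chord-of-contact calculation.
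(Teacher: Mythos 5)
Your proposal follows essentially the same route as the paper's proof: a geometric bound on the admissible lateral miss at the next cluster (your chord-of-contact/similar-triangles window, which is the paper's Pythagorean ``triangle approximation'' step yielding $d_{t,\max}^2 < d_{com}^2 - d_{com}^4/(d_{com}+d_{h1})^2$), combined with the exponential variance model to lower-bound the accumulated spread by $2\sigma_0^2+(\beta_1+1)e^{\beta_2 d_{h1}}$. The one step you flag as an obstacle --- the origin of the coefficient $(\beta_1+1)$ rather than $2\beta_1$ --- is not resolved by the paper either, which simply asserts it in its first inequality, so your account is no less complete than the published argument.
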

where the navigation distance $d_{h1} = \frac{\sqrt{k}-2d_{com}N_{ch}}{N_{ch}+1}$, and the number of cluster in X or Y direction $N_{ch} = \lfloor \sqrt{\lfloor \frac{N_{ta}}{N_{ca}} \rfloor} \rfloor$.
\begin{proof}
	If the sailing distance of AUV is $d_{h2}$ as shown in Fig.~\ref{fig02}, there is
	\begin{equation}
		2\sigma_0^2 + (\beta_1+1) e^{\beta_2d_{h2}} < d_{t,max}^2,\label{eq16}
	\end{equation}
	According to the Pythagorean theorem and the triangle approximation rule, there is 
	\begin{equation}
		\frac{d_{com}^2 - d_{t,max}^2}{d_{com}^2} > \frac{d_{com}^2}{(d_{com}+d_{h1})^2}.\label{eq17}
	\end{equation}
	Considering $d_{h2} > d_{h1}$, there will be
	\begin{equation}
		d_{com}^2 - \frac{d_{com}^4}{(d_{com}+d_{h1})^2} > 2\sigma_0^2 + (\beta_1+1) e^{\beta_2d_{h1}}.\label{eq18}
	\end{equation}
	By substituting $d_{h1}$, we can obtain the proposition 1.
\end{proof}

\subsubsection{The impact of node deployment quantity on navigation performance}

To explore the relationship between node deployment quantity and average navigation error, we will discuss the scaling law in this part. As Fig.~\ref{fig02} shows, the AUV sails horizontally from the left side to the right with constant velocity of $\boldsymbol{v}$ and positioning time slot $T$. To optimize the positioning results, we assume that the AUV passes through the center point of each anchor cluster and approximates the expected positioning error at each point within the cluster as the optimal value that $\mathbb{E}[\rm{CRLB}] = \frac{4\sigma_d^2}{N_{ca}\cos^2\alpha}+\frac{\sigma_d^2}{N_{ca}\sin^2\alpha}$, where $\sigma^2_{d} = \sum_{i=2}^I \frac{\gamma^2 s_{i-1}^2}{s_1^2 - s_{i-1}^2\cos^2\alpha}$ is the variation of LOS measurement error. Actually, this is the lower bound of expected navigation error. According to Fig.~\ref{fig02}, we have 
\begin{proposition}
	Scaling law, the expected navigation error is
	\begin{equation}
		\begin{aligned}
			P(N_{ta},N_{ca})  &= \frac{2d_{com}\lfloor \sqrt{N_c} \rfloor}{\sqrt{k}}\left(\frac{4\sigma_d^2}{N_{ca}\cos^2\alpha}+\frac{\sigma_d^2}{N_{ca}\sin^2\alpha}\right)\\
			+ &\frac{d_{h1}\left(\lfloor \sqrt{N_c} \rfloor+1\right)}{\sqrt{k}}\frac{1}{N} \sum_{n=1}^{N} \left(\sigma^2_{0,x} +\beta_1\mathrm{e}^{\beta_2\Delta p_{nT,x}}\right),
		\end{aligned}\label{eq19}
	\end{equation}
	where $N = \frac{d_{h1}}{\boldsymbol{v}T}$ is the number of sampling points used for calculating positioning errors.
\end{proposition}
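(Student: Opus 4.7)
The plan is to decompose the AUV's straight-line transit across the $\sqrt{k}$ km side of the region into two alternating types of sub-intervals, compute the contribution of each to the expected total error, and then form a length-weighted average. Writing $N_{ch}=\lfloor\sqrt{N_c}\rfloor$, the straight horizontal path passes through $N_{ch}$ anchor clusters, each of which contributes a covered segment of length approximately $2d_{com}$ (the along-track diameter of the cluster's localization footprint). The remaining path is split into $N_{ch}+1$ inter-cluster gaps, each of length $d_{h1}$ as defined in Proposition~1. I would first verify the geometric bookkeeping $2d_{com}N_{ch}+(N_{ch}+1)d_{h1}=\sqrt{k}$, which is exactly the definition of $d_{h1}$; this confirms the two coefficients $\tfrac{2d_{com}N_{ch}}{\sqrt{k}}$ and $\tfrac{(N_{ch}+1)d_{h1}}{\sqrt{k}}$ are legitimate length fractions summing to one.

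Next I would handle the covered segments. Here, by the assumption that the AUV passes through each cluster's center, the positioning error at every in-cluster sample is upper-bounded (in the CRLB sense) by its optimal value. Using the A-optimality expression derived in Section~II and minimized when $\alpha_1=\cdots=\alpha_J=\alpha$, I would substitute the equalized anchor geometry into $\boldsymbol{\Phi}^{-1}$, take its trace, and obtain the per-sample contribution $\frac{4\sigma_d^2}{N_{ca}\cos^2\alpha}+\frac{\sigma_d^2}{N_{ca}\sin^2\alpha}$ with $\sigma_d^2=\sum_{i=2}^I\frac{\gamma^2 s_{i-1}^2}{s_1^2-s_{i-1}^2\cos^2\alpha}$. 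Since this value is approximated as constant over the whole covered stretch, averaging over the covered sub-intervals simply reproduces this quantity, scaled by the total covered fraction $\tfrac{2d_{com}N_{ch}}{\sqrt{k}}$ — giving the first summand of \eqref{eq19}.

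For the uncovered segments I would invoke the exponential error-growth law \eqref{eq2}: at the $n$th sampling instant after exiting a cluster, the horizontal position-error variance is $\sigma_{0,x}^2+\beta_1 e^{\beta_2\Delta p_{nT,x}}$. Averaging over the $N=d_{h1}/(\boldsymbol{v}T)$ sampling points in a single gap and weighting by the uncovered length fraction $\tfrac{(N_{ch}+1)d_{h1}}{\sqrt{k}}$ yields the second summand of \eqref{eq19}. Because all $N_{ch}+1$ gaps are statistically identical (the AUV is reset to a zero-mean error distribution upon re-entry into the next cluster, by the hybrid model of Section~II-C), the per-gap average equals the across-gap average and no additional indexing is required.

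The main obstacle I expect is justifying two approximation steps that the proposition implicitly uses. The first is the approximation that the CRLB is essentially constant across the entire covered segment; strictly the CRLB only attains the stated lower bound at the cluster center, and a tighter proof would integrate the CRLB along the chord through the footprint and show the resulting correction is $O(1/N_{ca})$-subdominant. The second is treating $N = d_{h1}/(\boldsymbol{v}T)$ as an integer number of samples that exactly tiles the gap; I would note this is a negligible round-off for small $T$. Given these two conventions, adding the two contributions yields \eqref{eq19} and closes the argument.
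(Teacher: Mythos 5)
Your proposal is correct and follows essentially the same route as the paper's proof: decompose the transit into $\lfloor\sqrt{N_c}\rfloor$ covered segments of length $2d_{com}$ and $\lfloor\sqrt{N_c}\rfloor+1$ gaps of length $d_{h1}$, evaluate the per-segment errors via the center-point CRLB and the exponential divergence law respectively, and combine them with the length-fraction weights. Your added checks (that the two fractions sum to one by the definition of $d_{h1}$, and the explicit flagging of the constant-CRLB and integer-$N$ approximations) go slightly beyond what the paper writes but do not change the argument.
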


\begin{proof}
	For AB (or CD, EF) section, the expected navigation error is:
	\begin{equation}
		\begin{aligned}
			P_{CD}(N_{ta},N_{ca}) &= \mathbb{E}_{CD}\left[tr(\boldsymbol{C}_{nv})\right]=\frac{1}{N} \sum_{n=1}^{N} \sigma^2_{nT,x}\\
			&= \frac{1}{N} \sum_{n=1}^{N} \left(\sigma^2_{0,x} +\beta_1\mathrm{e}^{\beta_2\Delta p_{nT,x}}\right).
		\end{aligned}\label{eq20}
	\end{equation}
	
	For BC (or DE) section, the expected positioning error representing by the center point is:
	\begin{equation}
		\begin{aligned}
			P_{BC}(N_{ta},N_{ca}) =\frac{4\sigma_d^2}{N_{ca}\cos^2\alpha}+\frac{\sigma_d^2}{N_{ca}\sin^2\alpha}.
		\end{aligned}\label{eq21}
	\end{equation}
	
	With total deployed anchors $N_{ta}$ and anchors in each cluster $N_{ca}$, the anchor clusters in X direction will be $\lfloor \sqrt{N_c} \rfloor = \lfloor \sqrt{\lfloor N_{ta}/N_{ca}\rfloor} \rfloor$. Then, there will be $\lfloor \sqrt{N_c} \rfloor +1$ AB (or CD, EF) sections and $\lfloor \sqrt{N_c} \rfloor $ BC (or DE) sections, so dividing based on the proportion of path length, the proportion of AB segment to the total length is $\frac{d_{h1}\left(\lfloor \sqrt{N_c} \rfloor+1\right)}{\sqrt{k}}$, and the proportion of BC segment to the total length is $\frac{2d_{com}\lfloor \sqrt{N_c} \rfloor}{\sqrt{k}}$. Combining these two coefficients with \eqref{eq20} and \eqref{eq21}, we can obtain \eqref{eq19}.
\end{proof}


\section{Results and Discussions}\label{sec4}
\subsection{Platform and Parameter settings}
In our simulations, the maximum depth of the region was set to be 3000 meters, and the optimal deployment grazing angle of anchors within the anchor group relative to the center point was 46 degrees according to \cite{Huang2024Opt}. The searching step of the traversal method for calculating equation \eqref{eq10} was set to be 100 meters. The communication coverage of anchor nodes was 5000 meters, and the error factor of time of arrival measurement was $\gamma = 0.001$, which means the standard deviation of error is one thousandth of the distance \cite{Huang2024Opt}. The AUV sails at a speed of 2 m/s, and the $\sigma_0$ is set to 0.1 m. 
\subsection{Accuracy performance}
To visually demonstrate the localization performance of anchor clusters, Fig.~\ref{fig05} gives an error distribution example with 3 anchors in each cluster at the depth of 500 meters. As the number of nodes within the cluster increases, the average CRLB and RMSE both increase, although the optimal CRLB of the center decreases \cite{Huang2024Opt}. This is mainly due to the poor geometric topology of the positioning service edge, which reduces the average positioning accuracy.


\vspace{-5mm}
\begin{figure}[!htbp]
	\centering
	\begin{minipage}[h]{0.24\textwidth}
		\centering
		\includegraphics[width=\linewidth]{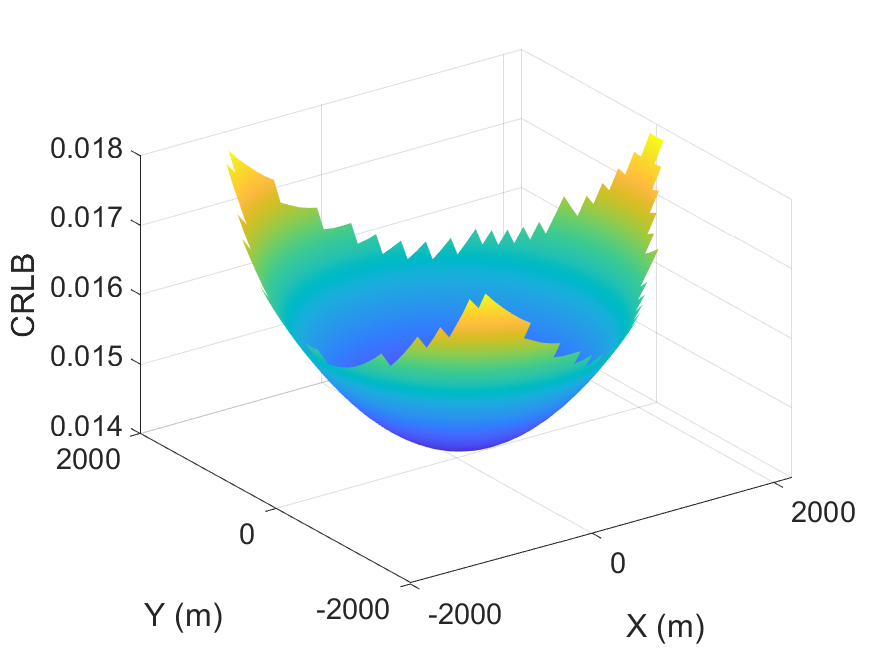}%
		\caption{Distribution of CRLB at depth of 500 meters with 3 anchors in each cluster.}
		\label{fig05}
	\end{minipage}
	\begin{minipage}[h]{0.24\textwidth}
		\centering
		\includegraphics[width=\linewidth]{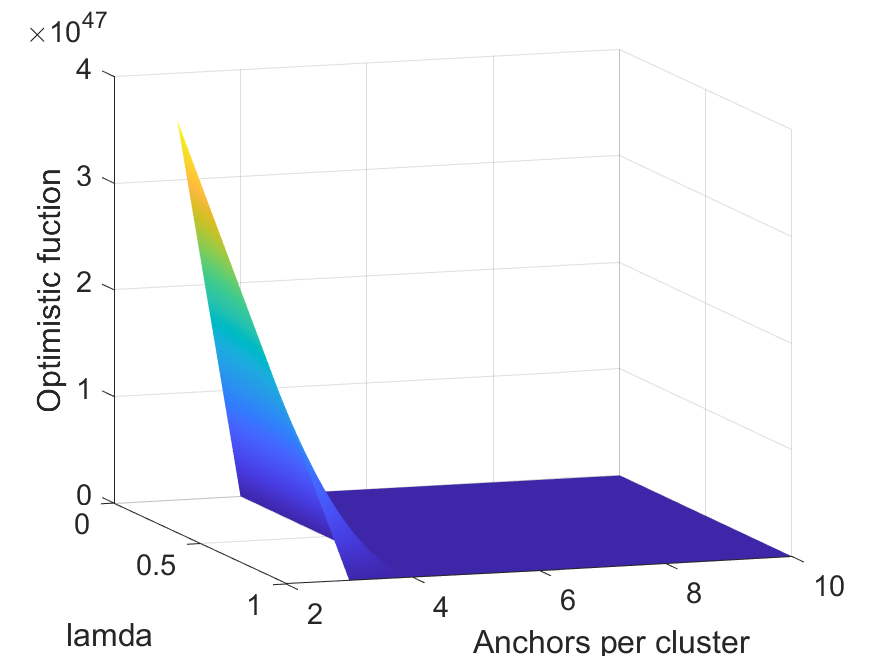}%
		\caption{Optimal value of problem \eqref{eq12} with different $\lambda_1$ and anchors per cluster.}
		\label{fig06}
	\end{minipage}
\end{figure}
\vspace{-5mm}

To find the optimal solution for equation \eqref{eq12}, we traversed different values of lambda and the sailing distance outside the anchor cluster. The sailing depth of AUV was assumed to be 500 meters, and navigation error coefficients were set to be $\beta_1 = 0.039$, $\beta_2 = 0.053$ based on the fitting of Fig.~\ref{fig03}a. The simulation result when the AUV only moved in the horizontal X or Y direction was given in Fig.~\ref{fig06}. 

\vspace{-5mm}
\begin{figure}[!htbp]
	\centering
	\begin{minipage}[h]{0.22\textwidth}
		\centering
		\includegraphics[width=\linewidth]{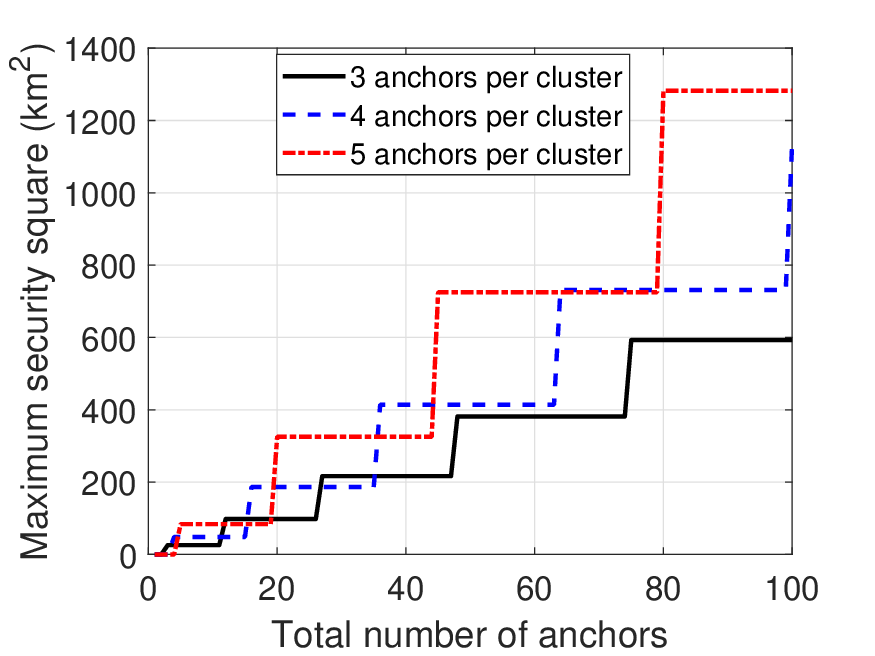}%
		\caption{Coverage square versus number of anchors.}
		\label{fig07}
	\end{minipage}
	\begin{minipage}[h]{0.26\textwidth}
		\centering
		\includegraphics[width=\linewidth]{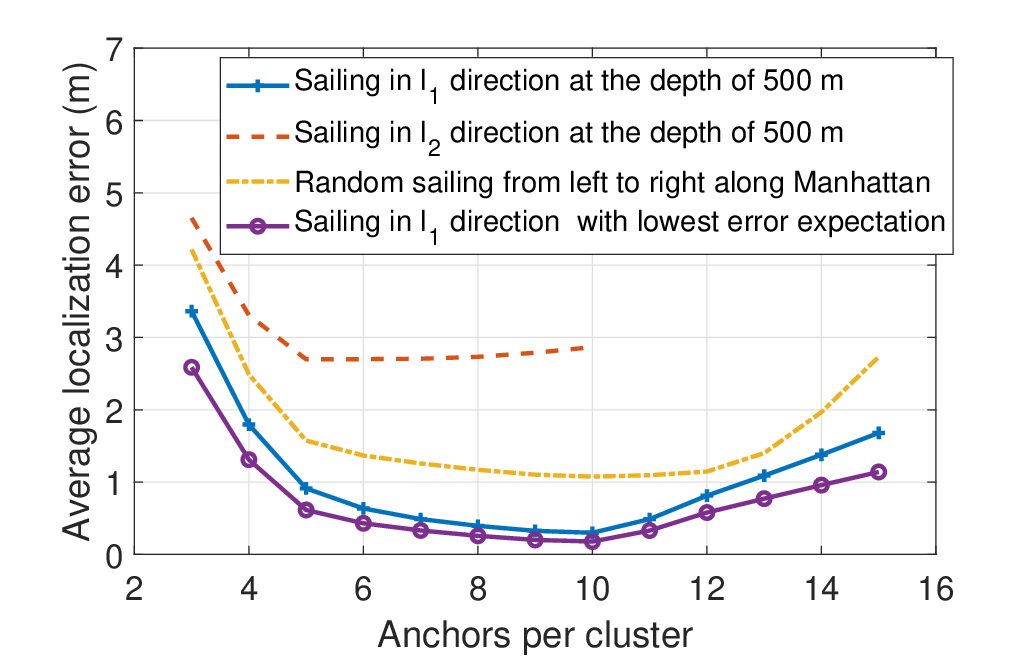}%
		\caption{Average localization error by different sailing paths.}
		\label{fig08}
	\end{minipage}
\end{figure}
\vspace{-5mm}

Due to the initial error variance of the inertial navigation system being higher than the expected CRLB of the anchor positioning system, the result in Fig.~\ref{fig06} shows that the $\lambda_1$ should be 1, which means seamless connection of the anchor clusters. Therefore, in the case of limited nodes, the optimal deployment scheme is to deploy proper nodes per cluster, maintaining the maximum coverage of anchor clusters. However, it is difficult to achieve seamless coverage of anchor clusters within a given large-scale area. At this point, to ensure that the AUV can smoothly navigate to nearby areas, there will be a upper bound on the coverage square of area as described by Proposition 1. Fig.\ref{fig07} shows the relations between coverage square and the maximum total number of anchors to be deployed. For any anchor cluster, the positioning service radius is 2279 meters for 3 anchors, 3213 meters for 4 anchors, and 4305 meters for 5 anchors.

To evaluate the impact of the number of nodes in different clusters on localization performance given the area and the total deployment node limit, we tested different paths (the random navigation curve adopts Monte Carlo simulation, and the starting and destination coordinates are randomly initialized) and presented the results in Fig.~\ref{fig08} with tests of 100 times. Fig.~\ref{fig08} is reasonable and consistent with the analysis of $\lambda$ mentioned earlier. For different path modes, the overall distribution interval is proportional to the length of the pure navigation path. Due to the fact that pure navigation error is much greater than the positioning error of long baseline positioning systems, the longer the pure navigation path, the greater the overall positioning error. For a certain kind of path, the deployment of nodes within different clusters results in different coverage areas for cluster positioning, but still follows the rule that the shorter the pure navigation spacing, the smaller the positioning error.


\section{Conclusion}\label{sec5}
In this paper, we established an optimization model that jointly considers the localization performance within and outside anchor clusters to minimize the overall navigation error. In the scenario of anchor positioning, we derived the CRLB under the effect of sound ray refraction. While in the scenario of navigation, we found the service area coverage conditions that ensure a high probability of reaching the destination and derived a scaling law about the influence of anchors in each cluster on the navigation performance within a given area. The simulation results confirmed our hypothesis.
\bibliographystyle{unsrt}
\bibliography{IEEEabrv,draft.bib}

\begin{thebibliography}{10}

\bibitem{Erol2011LocSurvey}
Melike Erol-Kantarci, Hussein~T. Mouftah, and Sema Oktug.
\newblock A survey of architectures and localization techniques for underwater
  acoustic sensor networks.
\newblock {\em IEEE Communications Surveys \& Tutorials}, 13(3):487--502, 2011.

\bibitem{Wang2020Navigation}
Di~Wang, Xiaosu Xu, Yiqing Yao, Tao Zhang, and Yongyun Zhu.
\newblock A novel sins/dvl tightly integrated navigation method for complex
  environment.
\newblock {\em IEEE Transactions on Instrumentation and Measurement},
  69(7):5183--5196, 2020.

\bibitem{Luo2021LocSurvey}
Junhai Luo, Yang Yang, Zhiyan Wang, and Yanping Chen.
\newblock Localization algorithm for underwater sensor network: A review.
\newblock {\em IEEE Internet of Things Journal}, 8(17):13126--13144, 2021.

\bibitem{Hu2024SparseeNode}
Zhentao Hu, Lin Zhou, Lu~Zhang, Yong Jin, and Qian Wei.
\newblock A range-free node localization method based on minmax error criterion
  for sparse anchors wireless sensor networks.
\newblock {\em IEEE Transactions on Vehicular Technology}, 73(6):9182--9186,
  2024.

\bibitem{Huang2024Deployment}
Wei Huang, Runying Qiu, Jixuan Zhou, and Tianhe Xu.
\newblock Deployment strategy analysis for underwater geodetic networks.
\newblock {\em Journal of Marine Science and Engineering}, 12(1), 2024.

\bibitem{Meng2025TWC}
Kaitao Meng, Christos Masouros, Athina~P. Petropulu, and Lajos Hanzo.
\newblock Cooperative {ISAC} networks: Performance analysis, scaling laws, and
  optimization.
\newblock {\em IEEE Transactions on Wireless Communications}, 24(2):877--892,
  2025.

\bibitem{Zhao2013OptSensor}
Shiyu Zhao, Ben~M. Chen, and Tong~H. Lee.
\newblock Optimal sensor placement for target localisation and tracking in 2d
  and 3d.
\newblock {\em International Journal of Control}, 86(10):1687--1704, 2013.

\bibitem{Rui2014OptReceiver}
Liyang Rui and K.~C. Ho.
\newblock Elliptic localization: Performance study and optimum receiver
  placement.
\newblock {\em IEEE Transactions on Signal Processing}, 62(18):4673--4688,
  2014.

\bibitem{Wang2021APOR}
Junting Wang, Tianhe Xu, Yangfan Liu, and Dapeng Mu.
\newblock Kalman filter based acoustic positioning of deep seafloor datum point
  with two-step systematic error estimation.
\newblock {\em Applied Ocean Research}, 114:102817, 2021.

\bibitem{Xue2022JOG}
Shuqiang Xue, Yuanxi Yang, and Wenlong Yang.
\newblock Single-differenced models for gnss-acoustic seafloor point
  positioning.
\newblock {\em Journal of Geodesy}, 96:1--22, 2022.

\bibitem{IMU2025}
Hexagon.
\newblock Imu-isa-100c product sheet ( high-performance tactical grade imu
  combines with span technology from hexagon | novatel to deliver 3d position,
  velocity and attitude solution), 2025.
\newblock
  \url{https://novatel.com/products/gnss-inertial-navigation-systems/imus/isa-100c},
  Last accessed on 2025-04-27.

\bibitem{Huang2024Opt}
Wei Huang, Pengfei Wu, Tianhe Xu, Hao Zhang, and Kaitao Meng.
\newblock Optimal reference nodes deployment for positioning seafloor anchor
  nodes in internet of underwater things.
\newblock {\em IEEE Internet of Things Journal}, 12(10):14254--14268, 2025.

\bibitem{Xu2017Opt}
Sheng Xu and Kutluyıl Doğançay.
\newblock Optimal sensor placement for 3-d angle-of-arrival target
  localization.
\newblock {\em IEEE Transactions on Aerospace and Electronic Systems},
  53(3):1196--1211, 2017.

\end{thebibliography}

\end{document}